\def\Bbox{
{\unskip\nobreak\hfil\penalty50
\hskip1em\hbox{}\nobreak\hfil{\lower .5pt \hbox{$\Box$}}
\parfillskip=0pt \finalhyphendemerits=0 \par}
}
\def\eop{
\ifmmode {\hbox{\Bbox}} \else \Bbox \fi
}
\def\bbox{
\ifmmode {\hbox{\bbox}} \else \Bbox \fi
}
\def\rank{{\mathrm{rank}}}
\def\level{{\mathrm{level}}}
\def\firstBranch{{\mathrm{branch}}}
\def\head{{\mathrm{head}}}
\def\maxNodes{{\mathrm{maxNodes}}}
\def\fr{{\mathrm{fr}}}
\def\lfr{{\mathrm{lfr}}}
\def\X{{\mathcal{X}}}
\def\L{{\mathcal{L}}}
\title{Operational characterization of scattered MCFLs\\Technical Report}
\author{Zolt\'an \'Esik \and Szabolcs Iv\'an}
\institute{University of Szeged, Hungary}
\begin{document}

\maketitle

\begin{abstract}
We give a Kleene-type operational characterization of
Muller context-free languages (MCFLs) of well-ordered 
and scattered words.
\end{abstract}

\section{Introduction}

A word, called `arrangement' in \cite{Courcelle}, is an isomorphism type of a 
countable labeled linear order. They form a generalization of the classic
notions of finite and $\omega$-words.

Finite automata on $\omega$-words have by now a vast literature, see \cite{PerrinPin}
for a comprehensive treatment. Finite automata acting on well-ordered words longer 
than $\omega$ have been investigated in   
\cite{Bedon96,Buchi73,Choueka,Wojciechowski84,Wojciechowski85}, to mention
a few references.
In the last decade, the theory of automata on well-ordered words 
has been extended to automata on all countable words, including scattered and dense words.
In \cite{Bedonetal,BesCarton,Bruyereetal}, both operational
and logical characterizations of the class of languages of countable words 
recognized by finite automata were obtained.

Context-free grammars generating $\omega$-words were introduced in 
\cite{CohenGold} and subsequently studied in \cite{Boasson,Nivat}.  
Context-free grammars generating arbitrary countable 
words were defined in \cite{EsikIvanBuchi,EsikIvanMuller}. Actually, two types of grammars 
were defined, context-free grammars with B\"uchi acceptance condition (BCFG),
and context-free grammars with Muller acceptance condition (MCFG). 
These grammars generate the B\"uchi and the Muller context-free languages 
of countable words, abbreviated as BCFLs and MCFLs. 
Every BCFL is clearly an MCFL, but there exists an MCFL of well-ordered words 
that is not a BCFL, for example the set of all countable well-ordered words 
over some alphabet. In fact, it was shown in \cite{EsikIvanBuchi}
that for every BCFL $L$ of well-ordered words there is an integer $n$ such that 
the order type of the underlying linear order of every word in $L$ is bounded 
by $\omega^n$.

A Kleene-type characterization of BCFLs of well-ordered and
scattered words was given in \cite{EsikOkawa}. 
Here we provide a Kleene-type characterization of MCFLs of well-ordered 
and scattered words. 
%We prove that a language over an
%alphabet is an MCFL 
%consisting of well-ordered words iff it can be generated 
%from the singleton languages containing the letters of 
%the alphabet by substitution into ordinary context-free 
%languages and the $\omega$-power operation. We also establish 
%a corresponding result for MCFLs of scattered words
%using a variant of $\omega$-power operation defined
%on sets of pairs of words.  Then we define expressions 
%denoting MCFLs of well-ordered and scattered words.
%In \cite{EsikOkawa}, it was shown that similar but {\em more restrictive}
%expressions exist for BCFLs of well-ordered and scattered words, 
%due to the fact that the application ofthe $\omega$-power operation
%was restricted. 
Before presenting the necessary preliminaries in detail, 
we give a formulation of our main result, at least in the well-ordered case. 

Suppose that $\Sigma$ is an alphabet, and let $\Sigma^\sharp$ denote the set of 
all (countable) words over $\Sigma$.  Let $P(\Sigma^\sharp)$ be the set 
of all subsets of $\Sigma^\sharp$. The set of $\mu\omega T_w$-expressions over
 $\Sigma$ is defined by the following grammar:
\begin{eqnarray*}
T &::=& a\ |\ \varepsilon\ |\ x\ |\ T+T\ |\ T\cdot T\ |\ \mu x . T\ |\ T^\omega
\end{eqnarray*}
Here, each letter $a \in \Sigma$ denotes the language containing $a$ as its unique 
word, while $\varepsilon$ denotes the language containing only the empty word. 
The symbols $+$ and $\cdot$ are interpreted as set union and concatenation 
over $P(\Sigma^\sharp)$, and the variables $x$ range over languages
 in  $\Sigma^\sharp$. The $\mu$-operator corresponds to taking least fixed points.
Finally, $^\omega$ is interpreted as the $\omega$-power operation 
over $P(\Sigma^\sharp)$: $L \mapsto L\cdot L \cdots$\ . An expression is closed if 
each variable occurs in the scope of a least fixed-point operator. 
Each closed expression denotes a language in $P(\Sigma^\sharp)$.
Our main result in the well-ordered case, which is a corollary of Theorem~\ref{thm-scattered} is:

\begin{theorem}
\label{thm-well ordered}
A language $L \subseteq \Sigma^\sharp$ is an MCFL of well-ordered words iff 
it is denoted by some closed $\mu\omega T_w$-expression.
\end{theorem} 

\begin{example}
The expression $\mu x. (x^\omega + a + b + \varepsilon)$ denotes the set of all 
well-ordered words over the alphabet $\{a,b\}$. 
\end{example} 

It was shown in \cite{EsikOkawa} that the syntactic fragment of the above expressions,
with the $\omega$-power operation restricted to closed expressions, characterizes
the BCFLs of well-ordered words. 
A similar, but more involved result holds for MCFLs of scattered words,
cf. Theorem~\ref{thm-scattered}. 
Both theorems were conjectured by the authors of \cite{EsikOkawa}.

\section{Notation}

\subsection{Linear orderings}
  A \emph{linear ordering} is a pair $(I,<)$, where $I$ is a set and $<$ is an irreflexive transitive trichotomous relation (i.e. a strict total ordering) on $I$.
  If $I$ is finite or countable, we say that the ordering is finite or countable as well.
  \emph{In this paper, all orderings are assumed to be countable.}
  A good reference for linear orderings is \cite{Rosenstein}.
  
  An \emph{embedding} of the linear ordering $(I,<)$ into $(J,\prec)$ is an order preserving function $f:I\to J$, i.e. $x<y$ implies $f(x)\prec f(y)$ for
  each $x,y\in I$. If $f$ is surjective, we call it an \emph{isomorphism}. Two linear orderings are said to be \emph{isomorphic} if there exists an isomorphism
  between them. Isomorphism between linear orderings is an equivalence relation;
  classes of this equivalence relation are called \emph{order types}.
  If $I\subseteq J$ and $<$ is the restriction of $\prec$ onto $I$, then we say that $(I,<)$ is a 
  \emph{sub-ordering} of $(J,\prec)$.
  
  Examples of linear orderings are the ordering $(\mathbb{N},<)$ of the positive integers, the ordering $(\mathbb{N}_{-},<)$ of the negative integers,
  the ordering $(\mathbb{Z},<)$ of the integers
  and the ordering $(\mathbb{Q},<)$ of the rationals. The respective order types are denoted $\omega$, $-\omega$, $\zeta$ and $\eta$.
  In order to ease notation, we write simply $I$ for $(I,<)$ if the ordering $<$ is standard or known from the context.
  
  An ordering is \emph{scattered} if it does not have a sub-ordering of order type $\eta$, 
  otherwise it is \emph{quasi-dense}.
  An ordering is a \emph{well-ordering} if it does not have a sub-ordering of order type $-\omega$. Order types of well-orderings are called \emph{ordinals}.
  
  When $(I,<)$ is an ordering and for each $i\in I$, $(J_i,<_i)$ is an ordering, then the \emph{generalized sum}
  $\mathop\sum\limits_{i\in I}(J_i,<_i)$ is the disjoint union $\{(i,j):i\in I, j\in J_i\}$ equipped with the 
  lexicographic ordering $(i,j)<(i',j')$ iff $i<i'$, or $i=i'$ and $j<_ij'$.
  It is known that if $(I,<)$ and the $(J_i,<_i)$ are scattered or well-ordered, then so is the generalized sum. 
  The operation of generalized sum can be extended to order types since it preserves isomorphisms.
  For example, $\zeta=-\omega+\omega$.
  Ordinals are also equipped with an exponentiation operator. 
  
Hausdorff classified linear orderings into an infinite hierarchy.
Following \cite{Khoussainovetal}, we present a variant of this hierarchy. 
Let $VD_0$ be the collection of all finite linear orderings,
  and when $\alpha$ is some ordinal, let $VD_\alpha$ be the collection of all finite sums of linear orderings of the form
  $\mathop\sum\limits_{i\in\mathbb{Z}}(I_i,<_i)$, where for each integer $i\in\mathbb{Z}$, $(I_i,<_i)$ is a member of  $VD_{\alpha_i}$
  for some ordinal $\alpha_i<\alpha$. According to a theorem of Hausdorff (see e.g.~\cite{Rosenstein}, Thm. 5.24), a (countable) linear ordering $(I,<)$ is scattered if and only if it belongs to $VD_\alpha$
  for some (countable) ordinal $\alpha$; the least such $\alpha$ is called the \emph{rank} of $(I,<)$, denoted $\mathrm{rank}(I,<)$.
    
\subsection{Words, tree domains, trees}
  An \emph{alphabet} is a finite nonempty set $\Sigma$ of symbols, usually called \emph{letters}.
  A \emph{word} over $\Sigma$ is a linear ordering $(I,<)$ equipped with a \emph{labeling function} $\lambda:I\to\Sigma$. 
  An \emph{embedding of words} is a mapping preserving the order and the labeling; a surjective embedding is an \emph{isomorphism}. 
  Order theoretic properties of the underlying linear ordering of a word 
   are transferred to the word.
  A word is finite if its underlying linear order is finite,
  and an $\omega$-word, if its underlying linear order is a well-order of order type 
  $\omega$. 
  We usually identify isomorphic words and denote by $\Sigma^\sharp$ the set of all words over $\Sigma$. 
  As usual, we denote the collection of finite and $\omega$-words over $\Sigma$ by $\Sigma^*$ and 
  $\Sigma^\omega$, respectively.  The length of a word
   $u \in \Sigma^*$ is denoted $|u|$.
  A language over $\Sigma$ is a subset of $\Sigma^\sharp$. As in the introduction, 
  we let $P(\Sigma^\sharp)$ denote the
  collection of all languages over $\Sigma$.

  When $(I,<)$ is a linear ordering and  $w_i=(J_i,<_i,\lambda_i)$ for $i \in I$ are words, 
  then we define their \emph{concatenation} $\mathop\prod_{i\in I}w_i$ as the word with 
  underlying linear order 
   $\mathop\sum\limits_{i\in I}(J_i,<_i)$ and labeling $\lambda(i,j)=\lambda_i(j)$. 
  When $I$ has two elements, we obtain the usual notion of concatenation, denoted $u\cdot v$, or just $uv$.  The operation of concatenation is 
  extended to languages in $P(\Sigma^\sharp)$: $\mathop\prod_{i\in I}L_i = \{
  \mathop\prod_{i\in I}w_i : w_i \in L_i\}$. When $L,L_1,L_2 \subseteq \Sigma^\sharp$,
  then we define $L_1 + L_2$ to be the set union and $L_1L_2 = \{uv : u \in L_1,\ v \in L_2\}$. 
  Moreover,  we define $L^\omega = \mathop\prod_{i \in \mathbb{N}} L$. 
  
  The set $P(\Sigma^\sharp)$ of languages over $\Sigma$, equipped with the inclusion order, 
  is a complete lattice.
  When $A$ is a set, a function $f:P(A)^n\to P(A)$ is \emph{monotone} if $A_i\subseteq A_i'$ for each $i\in[n]$ implies 
  $f(A_1,\ldots,A_n)\subseteq f(A_1',\ldots,A_n')$.
  The following fact is clear. 
  
  \begin{lemma}
  \label{lem-cont1}
  The functions $+,\cdot : P(\Sigma^\sharp)^2 \to P(\Sigma^\sharp)$ and $^\omega:  P(\Sigma^\sharp) \to P(\Sigma^\sharp)$ are monotone.
  \end{lemma}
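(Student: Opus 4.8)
The plan is to verify monotonicity of the three operations one at a time, directly from their element-wise definitions; no auxiliary machinery is required, which is exactly why the lemma is flagged as clear. In each case I would fix arguments satisfying the relevant subset hypothesis and chase an arbitrary word of the output set through the defining equation. For $+$, I would take $L_1 \subseteq L_1'$ and $L_2 \subseteq L_2'$ and recall that $L_1 + L_2$ is by definition the set-theoretic union $L_1 \cup L_2$; since union is monotone in each argument, $L_1 \cup L_2 \subseteq L_1' \cup L_2'$, which gives the claim at once.

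For $\cdot$, under the same hypotheses a generic word of $L_1 L_2$ has the form $uv$ with $u \in L_1$ and $v \in L_2$. The assumptions yield $u \in L_1'$ and $v \in L_2'$, so $uv \in L_1' L_2'$, and hence $L_1 L_2 \subseteq L_1' L_2'$. For $^\omega$, I would take $L \subseteq L'$ and unfold $L^\omega = \prod_{i \in \mathbb{N}} L$: a generic word of $L^\omega$ is a concatenation $\prod_{i \in \mathbb{N}} w_i$ with every factor $w_i \in L$, whence every $w_i \in L'$ by hypothesis, and therefore the same concatenation lies in $L'^\omega$. This shows $L^\omega \subseteq L'^\omega$.

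I expect no genuine obstacle here. The only point worth flagging is that concatenation and the $\omega$-power are defined factorwise, so the containment of the argument languages is inherited factor by factor, while the underlying linear order indexing the concatenation is left untouched; the combinatorial shape of the resulting word therefore never changes, only the pool of admissible labels shrinks. For this reason the very same element-wise argument applies verbatim to the generalized concatenation $\prod_{i \in I} L_i$ over any countable index ordering $I$, which is the form in which these operations are used in the sequel.
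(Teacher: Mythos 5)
Your proof is correct and is precisely the routine element-wise verification the paper has in mind when it states the lemma without proof (``The following fact is clear''). Checking $+$, $\cdot$, and $^\omega$ directly from their definitions, with containment inherited factor by factor, is exactly the intended argument, and your closing remark that it extends verbatim to generalized products $\prod_{i\in I}L_i$ is a accurate observation about how the operations are used later.
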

  
  We will also consider \emph{pairs} of words over an alphabet $\Sigma$,
  equipped with a finite concatenation and an $\omega$-product operation.
  For pairs $(u,v)$, $(u',v')$ in $\Sigma^\sharp\times\Sigma^\sharp$, we define the
  product $(u,v)\cdot (u',v')$ to be the pair $(uu',v'v)$, and when
  for each $i\in\mathbb{N}$, $(u_i,v_i)$ is in $\Sigma^\sharp\times\Sigma^\sharp$, then we let $\mathop\prod\limits_{i\in\mathbb{N}}(u_i,v_i)$ be
  the word $\bigr(\mathop\prod\limits_{i\in\mathbb{N}}u_i\bigr)\bigl(\mathop\prod\limits_{i\in\mathbb{N}_{-}}v_i\bigr)$. Let $P(\Sigma^\sharp\times \Sigma^\sharp)$ denote the set of all subsets 
  of $\Sigma^\sharp \times \Sigma^\sharp$. Then $P(\Sigma^\sharp\times \Sigma^\sharp)$ is naturally 
  equipped with the operations of set union $L + L'$, concatenation $L\cdot L' = 
  \{(u,v)\cdot (u',v'): (u,v) \in L,\ (u',v') \in L'\}$ and Kleene star 
  $L^* = \{\varepsilon\} \cup L \cup L^2 \cup \cdots$. We also define an $\omega$-power operation 
  $P(\Sigma^\sharp \times \Sigma^\sharp) \to P(\Sigma^\sharp)$ by 
  $L^\omega = \{ \mathop\prod\limits_{i\in\mathbb{N}}(u_i,v_i) : (u_i,v_i) \in L \}$.
  When $L_1,L_2\subseteq \Sigma^\sharp$, let $L_1 \times L_2= \{(u,v) : u \in L_1,\ v \in L_2\}
  \subseteq \Sigma^\sharp \times \Sigma^\sharp$. 
  
  \begin{lemma}
  \label{lem-cont2}
  The functions
  \begin{eqnarray*}
  &&\times: P(\Sigma^\sharp)^2 \to  P(\Sigma^\sharp \times \Sigma^\sharp)\\
  && +,\cdot: P(\Sigma^\sharp \times \Sigma^\sharp)^2
  \to P(\Sigma^\sharp \times \Sigma^\sharp)\\
  && ^*: P(\Sigma^\sharp \times \Sigma^\sharp)
  \to P(\Sigma^\sharp \times \Sigma^\sharp)\\
  && ^\omega: P(\Sigma^\sharp \times \Sigma^\sharp)
  \to P(\Sigma^\sharp)
  \end{eqnarray*}
  are monotone.
  \end{lemma}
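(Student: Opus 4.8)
The plan is to observe that every one of these operations is defined by a set-builder expression in which an element of the output is \emph{witnessed} by a choice of elements from the input language(s); enlarging the inputs can therefore only enlarge the set of available witnesses, and hence the output. All four clauses follow from this single remark, exactly as Lemma~\ref{lem-cont1} did for the operations on $P(\Sigma^\sharp)$.

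First I would dispatch the operations $\times$, $+$, $\cdot$, and $^\omega$ by direct inspection. For $\times$, if $L_1\subseteq L_1'$ and $L_2\subseteq L_2'$, then any $(u,v)\in L_1\times L_2$ has $u\in L_1\subseteq L_1'$ and $v\in L_2\subseteq L_2'$, so $(u,v)\in L_1'\times L_2'$. The union $+$ is immediate. For the concatenation on $P(\Sigma^\sharp\times\Sigma^\sharp)$, an element of $L\cdot M$ has the form $(u,v)\cdot(u',v')$ with $(u,v)\in L$ and $(u',v')\in M$; if $L\subseteq L'$ and $M\subseteq M'$, the same witnesses lie in $L'$ and $M'$, so the product lies in $L'\cdot M'$. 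The $\omega$-power is handled identically: a word $\mathop\prod_{i\in\mathbb{N}}(u_i,v_i)$ in $L^\omega$ is witnessed by the sequence $(u_i,v_i)\in L$, and if $L\subseteq L'$ then each $(u_i,v_i)$ lies in $L'$, so the word lies in $(L')^\omega$.

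For the Kleene star I would reduce to the concatenation case. Writing $L^*=\bigcup_{n\ge 0}L^n$ with $L^0=\{\varepsilon\}$ and $L^{n+1}=L^n\cdot L$, monotonicity of each finite power $L\mapsto L^n$ follows by induction on $n$ from the monotonicity of $\cdot$ just established, the base case $L^0$ being constant. Since a countable union of monotone maps is again monotone, $L\mapsto L^*$ is monotone as well.

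I do not expect any genuine obstacle here; the entire content is the observation that definitions quantifying existentially over their arguments are automatically monotone. The only point deserving a word of care is that monotonicity as defined is \emph{joint} in all arguments, but for the binary operations this is immediate, since the two membership conditions imposed on the witnesses are independent of one another.
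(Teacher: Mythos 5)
Your proof is correct: the paper gives no proof of this lemma at all, treating it (like Lemma~\ref{lem-cont1}) as a clear fact, and your witness-based verification, including the reduction of $^*$ to finite powers and unions, is exactly the routine argument the paper leaves implicit.
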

  
  We will use Lemma~\ref{lem-cont1} and Lemma~\ref{lem-cont2} in the following context. Suppose that 
  for each $i \in [n]  = \{1,\ldots,n\}$, 
  $f_i : P(\Sigma^\sharp)^{n + p} \to P(\Sigma^\sharp)$ is a function that can be constructed 
  by function composition from the above functions, the projection functions and constant 
  functions. Let $f = \langle f_1,\ldots,f_n\rangle : 
  P(\Sigma^\sharp)^{n + p} \to P(\Sigma^\sharp)^n$ be the target tupling of the $f_i$.
  Then $f$ is a monotone function, and by Tarski's fixed point theorem, for each $y \in P(\Sigma^\sharp)^p$
  there is a least solution of the fixed point equation $x = f(x,y)$ in the variable $x$ ranging
  over $P(\Sigma^\sharp)^n$. This least fixed point, denoted $\mu x. f(x,y)$, gives rise to a 
  function $P(\Sigma^\sharp)^p \to P(\Sigma^\sharp)^n$ in the parameter $y$. It is known that 
  this function is also monotone, see e.g. \cite{BEbook}.

 % (The underlying linear order of such a product can always be embedded into an ordering with order type %$\omega+(-\omega)$.)

  A \emph{tree domain} is a prefix closed nonempty (but possibly infinite) subset of $\mathbb{N}^*$.
  Elements of a tree domain $T$ are also called nodes of $T$.
  When $x$ and $x\!\cdot\! i$ are nodes of $T$ for $x\in\mathbb{N}^*$ 
  and $i\in\mathbb{N}$, then $x\!\cdot\! i$ is a \emph{child} of $x$.
  A \emph{descendant} of a node $x$ is a node of the form $x\!\cdot\!y$, where $y\in\mathbb{N}^*$.
  Nodes of $T$ having no child are the \emph{leaves} of $T$. 
  The leaves, equipped with order inherited from the lexicographic ordering of $\mathbb{N}^*$ 
  form the \emph{frontier} of $T$, denoted $\fr(T)$. 
  An \emph{inner node} of $T$ is a non-leaf node.
  Subsets of a tree domain $T$ which themselves are tree domains are called \emph{prefixes} of $T$.
  A \emph{path} of a tree domain $T$ is a prefix of $T$ such that each node has at most one child.
  A path can be identified with the unique sequence $w$ in $\mathbb{N}^{\leq\omega}$ of all sequences over $\mathbb{N}$ of length at most $\omega$ 
  such that the set of nodes of the path consists of the finite prefixes of $w$.
  A path $\pi$ of $T$ is \emph{maximal} if no path of $T$ contains $\pi$ properly.
  When $T$ is a tree domain and $x\in T$ is a node of $T$, 
  then the \emph{sub-tree domain} $T|_x$ of $T$ is the set $\{y:xy\in T\}$.
  A tree domain $T$ is \emph{locally finite} if each node has a descendant which is a leaf. 
  
  A \emph{tree} over an alphabet $\Delta$ is a mapping $t:\mathrm{dom}(t)\to\Delta \cup\{\varepsilon\}$, where $\mathrm{dom}(t)$ is a tree domain,
  such that inner vertices are mapped to letters in $\Delta$.
  Notions such as nodes, paths etc. of tree domains are lifted to trees. 
  When $\pi$ is a path of the tree $t$, then $\mathrm{labels}(\pi) = \{ t(u) : u \in \pi \}$ is the set of labels of 
  the nodes of $\pi$, and  
  $\mathrm{infLabels}(\pi)$ is the set of labels occurring infinitely often. 
  For a path $\pi$, %esik infinite added
  $\mathrm{head}(\pi)$ denotes the minimal node $x$ of $\pi$ (with respect to the prefix order) with $\mathrm{infLabels}(\pi)=\mathrm{labels}(\pi|_x)$,
  if $\pi$ is infinite; otherwise $\mathrm{head}(\pi)$ is the last node of $\pi$.
  The labeled \emph{frontier word} $\mathrm{lfr}(t)$ of a tree $t$ is determined by the leaves \emph{not} labeled by $\varepsilon$, which is equipped
  with the lexicographic ordering of $\mathbb{N}^*$ and the labeling function of $t$.
  It is worth observing that when $\pi=x_0,x_1,\ldots$ is an infinite path of a tree $t$ and for each $i$, $\alpha_i$ ($\beta_i$, resp.)
  is the word determined by the leaf labels of the descendants of $x_i$ to the left (right, resp.) of $x_{i+1}$ (i.e.
  if $x_{i+1}$ is the $j$th child of $x_i$, then $\alpha_i=\mathrm{lfr}(t|_{x\!\cdot\!1})\cdot\mathrm{lfr}(t|_{x\!\cdot\! 2})\cdot\ldots\cdot\mathrm{lfr}(t|_{x\!\cdot\!(j-1)})$ 
  and similarly for $\beta_i$), then $\mathrm{lfr}(t)=\mathop\prod\limits_{i\in\mathbb{N}}(\alpha_i,\beta_i)$.
  
\subsection{Muller context-free languages of scattered words}

  A \emph{Muller context-free grammar}, or MCFG for short, is a system $G=(V,\Sigma,R,S,\mathcal{F})$, where $V$ is the alphabet of nonterminals,
  $\Sigma$ is the alphabet of terminals, $\Sigma \cap V = \emptyset$, $R$ is the finite set of productions of the form $A\to\alpha$ with $A\in V$ and $\alpha\in(\Sigma\cup V)^*$,
  $S\in V$ is the start symbol and $\mathcal{F}\subseteq P(V)$ is the set of 
  \emph{nonempty} accepting sets.
  
  A \emph{derivation tree} of the above grammar $G$ is a tree $t:\mathrm{dom}(t)\to V\cup\Sigma\cup\{\varepsilon\}$ satisfying the following conditions:
  \begin{enumerate}
  \item For each inner node $x$ of $t$ there exists a rule $X\to X_1\ldots X_n$ in $R$ such that $t(x)=X$, the children of $x$ are exactly
    $x\cdot 1,\ldots, x\cdot n$, and for each $i\in [n]$, $t(x\cdot i)=X_i$ so that when $n=0$, $x$ has a single
    child $x\cdot 1$ labeled $\varepsilon$;
  \item For each infinite path $\pi$ of $t$, $\mathrm{infLabels}(\pi)$ is an accepting set of $G$.
  \end{enumerate}
  
  A derivation tree is \emph{complete} if its leaves are all labeled in $\Sigma\cup\{\varepsilon\}$.
  If $t$ is a derivation tree having \emph{root symbol} $t(\varepsilon) = A$, then we say that 
  $t$ is an $A$-tree.
  The language $L(G,A) \subseteq \Sigma^\sharp$ \emph{generated} from $A \in V$ is the 
  set of frontier words of complete $A$-trees. The language $L(G)$ generated by $G$ is $L(G,S)$. 
  An MCFL is a language generated by some MCFG. 
  
  \begin{example}
  \label{expl-grammar-wellordered}
  If $G=(\{S,I\},\{a,b\},R,S,\{\{I\}\})$, with $$R=\{S\to a,S \to b,S \to \varepsilon, S\to I,I\to SI\},$$ then $L(G)$ consists of all the well-ordered words over $\{a,b\}$. 
  \end{example}
  
  \begin{example}
  \label{expl-grammar-scattered}
  If $G=(\{S,I\},\{a,b\},R,S,\{\{I\}\})$, with \[R= \{S\to a,S \to b,S \to \varepsilon, S\to I,I\to SIS\},\] then $L(G)$ consists of all the scattered words over $\{a,b\}$. 
  \end{example}

  Let $L\subseteq\Sigma^\sharp$ be an MCFL consisting of scattered words only and $G=(V,\Sigma,R,S,\mathcal{F})$ an
  MCFG with $L(G)=L$. We may assume that $G$ is in \emph{normal form} 
  \cite{EsikIvanMuller} -- among the
  properties of this normal form we will use the following ones (see \cite{EsikIvanMuller}, Prop. 14) frequently:
  \begin{itemize}
  \item For every derivation tree there is a locally finite 
        derivation tree with the same root symbol and same labeled frontier.
  \item The frontier of each derivation tree is scattered. 
  %\item When $t$ is a locally finite derivation tree and $\pi$ 
  %and $\pi'$ are different infinite paths of $t$, 
  %then $\infLabels(\pi) \neq \infLabels(\pi')$.

  %\item Each production has a nonempty right-hand side.
  %\item If $A\to X_1X_2\ldots X_n$ is a production of $G$ and $A\in F$ 
  %for some accepting set $F\in\mathcal{F}$, 
  %then there exists at most one $i\in[n]$ with $X_i\in F$.
  \end{itemize}
  
  {\bf In the rest of the paper, we fix an MCFG $G=(V,\Sigma,R,S,\mathcal{F})$ 
  in normal form generating only scattered words.}
  %Moreover, by a complete derivation
  %tree we will always understand a locally finite tree.}
  
  When $t$ is a derivation tree, then we define $\rank(t) = \rank(\fr(t))$.
  For a derivation tree $t$, let $\mathrm{maxNodes}(t)$ be the prefix of 
  $\mathrm{dom}(t)$ consisting of the nodes having maximal rank,
  i.e. $\mathrm{maxNodes}(t)=\{x\in\mathrm{dom}(t):\mathrm{rank}(t|_x)=\mathrm{rank}(t)\}$.
  Suppose that $t$ is locally finite. It is known, (see e.g. ~\cite{EsikIvanLatin}, proof of Proposition 1, paragraph 4) 
  that in this case $\mathrm{maxNodes}(t)$ is the union of finitely many maximal paths. 
  Clearly, the set $\{\pi_1,\ldots,\pi_n\}$ of these paths is unique. 
  Let $\mathrm{level}(t)$ stand for the above $n$, the number of 
  maximal paths covering $\mathrm{maxNodes}(t)$.
  Also, let $\firstBranch(t)$ stand for the longest common prefix
  of the paths $\pi_1,\ldots,\pi_n$ (which is a finite word if 
  $\mathrm{level}(t)>1$ and is $\pi_1$ if $\mathrm{level}(t)=1$).

  We say that a (not necessarily locally finite) 
  derivation tree $t$ is \emph{simple} if $\maxNodes(t)$ 
  contains a single infinite path $\pi$ and if 
  $\mathrm{infLabels}(\pi)=\mathrm{labels}(\pi)$, i.e. $\mathrm{head}(\pi)=\varepsilon$.
  (When $t$ is additionally locally finite, then this path $\pi$ contains all 
  nodes of $\maxNodes(t)$.) Such a path is called the \emph{central path} of $t$. 
  %Observe that,
  %since $G$ is in normal form, this path is unique.
  If $t$ is a simple $A$-tree and $F$ is the set of labels of its central path, 
  then we call $t$ an \emph{$F$-simple $A$-tree}.

\section{The main result}

For locally finite complete derivation trees $t'$ and $t$, let $t'\prec t$ if one of the following conditions holds:
  \begin{enumerate}
  \item $\mathrm{rank}(t')<\mathrm{rank}(t)$;
  \item $\mathrm{rank}(t')=\mathrm{rank}(t)$ and $\mathrm{level}(t')<\mathrm{level}(t)$;
  \item $\mathrm{rank}(t')=\mathrm{rank}(t)$, $\mathrm{level}(t')=\mathrm{level}(t)>1$ and
    $|\firstBranch(t')|<|\firstBranch(t)|$.
  \item $\mathrm{rank}(t')=\mathrm{rank}(t)$, $\mathrm{level}(t')=\mathrm{level}(t)=1$, that is, the
    set of nodes of maximal rank is a path $\pi$ in $t$ and a path $\pi'$ in $t'$. Then let $t'\prec t$ iff
    $|\mathrm{head}(\pi')|<|\mathrm{head}(\pi)|$.
  \end{enumerate}

  \begin{lemma}
  \label{lem-prec}
  The relation $\prec$ is a well-partial order (wpo) of \emph{locally finite}
  complete derivation trees. The minimal 
  elements of this wpo are the one-node trees corresponding to the elements of $\Sigma \cup\{\varepsilon\}$. 
  Suppose 
  that $t$ is a locally finite complete derivation
  tree and $t'=t|_x$ is a proper subtree of $t$, so that $x\neq\varepsilon$.
  If $t$ is not simple, or if $t$ is simple but $x$ does not belong to the central path of $t$, 
  then $t'\prec t$.
  \end{lemma}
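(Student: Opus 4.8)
The plan is to exhibit $\prec$ as the pullback of a genuine well-order along a numerical measure, which settles the wpo claim and the description of the minimal elements almost for free, and then to do the real combinatorial work in the subtree comparison. For a locally finite complete derivation tree $t$ I define $\mu(t)=(\rank(t),\,\level(t),\,\theta(t))$, where $\theta(t)=|\firstBranch(t)|$ if $\level(t)>1$ and $\theta(t)=|\head(\pi)|$ if $\level(t)=1$, with $\pi$ the unique maximal path covering $\maxNodes(t)$. First I would check that $\mu$ is well defined: $\rank(t)$ is a countable ordinal, $\level(t)$ a positive integer, and $\theta(t)$ a finite natural number (the common prefix $\firstBranch(t)$ is finite when $\level(t)>1$, and $\head(\pi)$ is a finite node because $V$ is finite, so only finitely many labels occur finitely often). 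Comparing the four clauses defining $\prec$ with the strict lexicographic order on $\mathrm{Ord}\times\mathbb{N}_{>0}\times\mathbb{N}$ shows that $t'\prec t$ iff $\mu(t')<_{\mathrm{lex}}\mu(t)$; the split on $\level>1$ versus $\level=1$ is precisely the split governing whether $\theta$ reads $|\firstBranch|$ or $|\head|$. Since a lexicographic product of well-orders is a well-order, $<_{\mathrm{lex}}$ is a strict well-order, and its pullback $\prec$ is therefore transitive, irreflexive, and a wpo: any infinite sequence $t_1,t_2,\dots$ maps to measures among which some pair satisfies $\mu(t_i)\le_{\mathrm{lex}}\mu(t_j)$ with $i<j$, i.e. $t_j\not\prec t_i$.

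For the minimal elements, I would observe that $(0,1,0)$ is the least element of $\mathrm{Ord}\times\mathbb{N}_{>0}\times\mathbb{N}$ and that a one-node tree labelled by $a\in\Sigma\cup\{\varepsilon\}$ has $\mu=(0,1,0)$: its frontier is finite (rank $0$), $\maxNodes$ is the root alone, $\level=1$, and $\head$ of the one-node path is the root, so $\theta=0$. Hence one-node trees are minimal. Conversely, $\mu(t)=(0,1,0)$ forces $t$ to be a one-node tree: $\rank(t)=0$ makes every subtree frontier finite, so $\maxNodes(t)=\mathrm{dom}(t)$; $\level(t)=1$ then makes the whole domain a single path; and $\theta(t)=|\head|=0$ together with local finiteness (which forbids an infinite branchless path) forces that path to be the root alone. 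Every tree with $\mu(t)>(0,1,0)$ has a one-node tree strictly below it, so the minimal elements are exactly the one-node trees.

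For the subtree comparison, fix locally finite complete $t$ and a proper subtree $t'=t|_x$ with $x\ne\varepsilon$. Using that $\rank$ is monotone under sub-orderings (the $VD_\alpha$ classes are closed under sub-orderings and $\fr(t|_x)$ is an interval of $\fr(t)$), if $x\notin\maxNodes(t)$ then $\rank(t|_x)<\rank(t)$ and clause (1) gives $t'\prec t$ with no further hypotheses. So assume $x\in\maxNodes(t)$. The key bookkeeping identities are $\maxNodes(t|_x)=\maxNodes(t)|_x$ (immediate from $\rank(t|_x)=\rank(t)$) and, writing $\pi_1,\dots,\pi_n$ for the unique maximal paths covering $\maxNodes(t)$, the fact that the maximal paths covering $\maxNodes(t|_x)$ are exactly the shifts below $x$ of those $\pi_i$ passing through $x$, which remain pairwise distinct; hence $\level(t|_x)$ equals the number of $\pi_i$ through $x$. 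Now $x$ lies on every $\pi_i$ iff $x$ is a prefix of $\firstBranch(t)$. If it is not, then $\level(t|_x)<n$ and clause (2) applies. If $x$ is a prefix of $\firstBranch(t)$ and $n>1$, then all paths survive, $\level(t|_x)=n>1$, and the first branch point shifts to $\firstBranch(t|_x)=x^{-1}\firstBranch(t)$, which is strictly shorter since $x\ne\varepsilon$; clause (3) applies.

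The remaining, and hardest, case is $n=\level(t)=1$ with $x$ on the single maximal path $\pi=\maxNodes(t)$. Here the hypothesis does essential work: if $t$ were simple then $\pi$ would be its central path and every $x\in\maxNodes(t)$ would lie on it, so ``$t$ not simple, or $x$ not on the central path'' forces $t$ \emph{not} simple; with $\level=1$ this means $\pi$ is finite, or $\pi$ is infinite with $\head(\pi)\ne\varepsilon$. I would then compute $\head$ under restriction. For finite $\pi$, $\head(\pi|_x^{\downarrow})$ is the last node and its length drops by exactly $|x|\ge1$. For infinite $\pi$, $\head(\pi|_x^{\downarrow})$ is the shift below $x$ of the least node $z\succeq x$ of $\pi$ with $\mathrm{labels}(\pi|_z)=\infLabels(\pi)$, so it has length $|z|-|x|$; writing $h=\head(\pi)\ne\varepsilon$, either $x\preceq h$, whence $z=h$ and the length is $|h|-|x|<|h|$, or $h\preceq x$, whence $z=x$ and the length is $0<|h|$. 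In every case $|\head(\pi|_x^{\downarrow})|<|\head(\pi)|$, and clause (4) gives $t'\prec t$. I expect this last case to be the main obstacle, precisely because the excluded configuration is genuinely non-strict: for a simple $t$ and $x$ on the central path one computes $\mu(t|_x)=(\rank(t),1,0)=\mu(t)$, so $t|_x$ is incomparable to $t$ — which both explains why the hypothesis is needed and confirms the measure.
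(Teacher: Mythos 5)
Your proof is correct, and for the order-theoretic half of the lemma it takes a genuinely different (and cleaner) route than the paper. The paper verifies transitivity of $\prec$ by a direct case analysis on the four clauses, argues well-foundedness tersely (``every set of ordinals is well-ordered''), and establishes minimality by picking a leaf $x$ of any tree with at least two nodes and checking $t|_x\prec t$ by hand. You instead package everything into the measure $\mu(t)=(\rank(t),\level(t),\theta(t))$ and verify that $\prec$ is exactly the pullback of the lexicographic well-order along $\mu$; irreflexivity, transitivity and well-foundedness then come for free, and minimality follows because one-node trees realize the globally least value $(0,1,0)$ (your witness for non-minimality is a fixed one-node tree rather than a leaf subtree, which is equally legitimate since one-node trees labeled in $\Sigma\cup\{\varepsilon\}$ are complete, locally finite derivation trees). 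Two further remarks. First, your equivalence $t'\prec t\iff\mu(t')<_{\mathrm{lex}}\mu(t)$ is exact precisely because, once ranks and levels agree, both trees read $\theta$ off the same kind of branch; this is worth stating, as it is the only place the case split in the definition of $\prec$ could have broken the correspondence. Second, your measure makes transparent something the paper leaves implicit: trees of equal measure are $\prec$-incomparable, so $\prec$ has infinite antichains and is a ``wpo'' only in the loose sense of a well-founded strict partial order --- which is all that the well-founded induction in Proposition~\ref{prop-fixed point} needs, and is also all the paper's own proof establishes. For the third claim of the lemma your argument follows the same skeleton as the paper's (rank drop, then the level bookkeeping via the bijection between maximal paths of $\maxNodes(t)$ through $x$ and maximal paths of $\maxNodes(t|_x)$, then the head computation when $\level=1$), but you are more explicit where the paper is terse: the paper simply asserts $|\head(\pi')|<|\head(\pi)|$ with a parenthetical remark, whereas you carry out the two cases $x\preceq h$ and $h\preceq x$ (legitimate, since $x$ and $h$ lie on a common path and are hence comparable), and your closing observation that $\mu(t|_x)=\mu(t)$ for $x$ on the central path of a simple tree correctly pinpoints why the hypothesis of the last claim cannot be dropped.
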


  \begin{proof}
  It is clear that $\prec$ is irreflexive. To prove that it is transitive, suppose that 
  $t''\prec t'$ and $t' \prec t$. If $\rank(t'') < \rank(t)$, then clearly $t'' \prec t$.
  Suppose that $\rank(t'') = \rank(t)$. Then also $\rank(t'') = \rank(t') = \rank(t)$. 
  If $\level(t'') < \level(t)$ then $t'' \prec t$ again. Thus, we may suppose that 
  $\level(t'') = \level(t)$, so that $\level(t'') = \level(t') = \level(t) = n$.
  Now there are two cases. If $n > 1$, then, since $t'' \prec t'$ and $t' \prec t$,
  we know that $|\firstBranch(t'')| < |\firstBranch(t')| < |\firstBranch(t)|$ and thus 
  $t'' \prec t$. If $n = 1$, then the maximal nodes form a single maximal 
  path in each of the trees $t'',t'$ and $t$.
  Let us denote these paths by $\pi'',\pi'$ and $\pi$, respectively. As $t''\prec t'$ and $t'\prec t$,
  we have that $|\head(\pi'')| < |\head(\pi')| < |\head(\pi)|$, so that $t'' \prec t$ again. 
  
  The fact that there is no infinite decreasing sequence of locally finite complete derivation 
  trees with respect to the relation $\prec$ is clear, since every set of 
  ordinals is well-ordered. 
  
  Suppose now that $t$ is a locally finite complete derivation tree which has at least two nodes.
  By assumption, $t$ has a leaf node $x$. Let $t' = t|_x$. If $\rank(t') <  \rank(t)$ 
  then $t' \prec t$. Otherwise, $\rank(t') = \rank(t) = 0$ and $t$ is necessarily finite 
  (since the frontier of an infinite complete derivation tree is infinite).
  Clearly, $\maxNodes(t)$ is the set of all nodes of $t$, 
  and either $\level(t') = 1 < \level(t)$, or $\level(t') = \level(t) = 1$. 
  In the latter case,
  $t$ has a single maximal path $\pi$, and $|\head(\pi')| = 0 < |\head(\pi)|$ 
  for the single maximal path $\pi'$ of $t'$. In either case, $t' \prec t$. 
  Thus, no locally finite complete derivation tree having more than one node is minimal. 
  On the other hand, all one-node complete derivation trees corresponding to 
  the elements of $\Sigma \cup\{\varepsilon\}$ are clearly minimal (and locally finite).
  
  To prove the last claim, suppose that $t$ is a locally finite complete derivation tree 
  and $t'=t|_x$. If $\mathrm{rank}(t')<\mathrm{rank}(t)$, we are done.
  Otherwise, $\mathrm{rank}(t')=\mathrm{rank}(t)$ and $x$ is a member of $\mathrm{maxNodes}(t)$.
  Thus, if $\pi$ is a maximal path of $\mathrm{maxNodes}(t')$, then $x\pi$ is a maximal path of $\mathrm{maxNodes}(t)$.
  Hence $\mathrm{level}(t')\leq \mathrm{level}(t)$. If $\mathrm{level}(t')<\mathrm{level}(t)$, we are done.
  Otherwise, $\mathrm{level}(t')=\mathrm{level}(t)$ and $\mathrm{maxNodes}(t)=x\mathrm{maxNodes}(t')$.
  
  %Now there are three cases.
  %\begin{enumerate}
  %\item If $\mathrm{level}(t)>1$, then $\mathrm{firstBranch}(t)=x\mathrm{firstBranch}(t')$, thus
  %  $|\mathrm{firstBranch}(t')|<|\mathrm{firstBranch}(t)|$ and $t' \prec t$.
  %\item Otherwise, if $\mathrm{level}(t)=1$ and $\mathrm{head}(\pi)\neq\varepsilon$, where $\pi$ is the
  %  unique maximal path of $t$ consisting of $\mathrm{maxNodes}(t)$, then $\mathrm{head}(\pi|_x)$ is a
  %  proper suffix of $\mathrm{head}(\pi)$, and $t' \prec t$.
  %\item Otherwise, i.e. if $\mathrm{level}(t)=1$ and $\mathrm{head}(\pi)=\varepsilon$, then $t$ is a simple tree
  % and $x$ is on the central path of $t$. Again, $t' \prec t$.
  %\end{enumerate}
  %\end{proof} %esik rewritten below

  Now there are two  cases.
  \begin{enumerate}
  \item If $\mathrm{level}(t)>1$, then $\firstBranch(t)=x\firstBranch(t')$, thus
     $|\firstBranch(t')|<|\firstBranch(t)|$ and $t' \prec t$.
  \item Suppose that $\level(t) = 1$, and let $\pi$ denote the unique maximal path of  
      $t$ whose nodes form the set $\mathrm{maxNodes}(t)$. Since $\rank(t') = \rank(t)$,
      we have that $x$ belongs to $\pi$ and, by assumption, $t$ is not simple. 
      Since $t$ is not simple and has at least two nodes, 
      $\head(\pi) \neq \varepsilon$ and $|\head(\pi')|  < |\head(\pi)|$, 
      where $\pi'$ is the unique maximal path of $t'$ whose nodes form the set $\mathrm{maxNodes}(t')$.
      (Actually $\pi'$ is determined by the proper suffix $\pi|_x$ of $\pi$.)
  %\item Otherwise, if $\mathrm{level}(t)=1$, then  $\mathrm{head}(\pi)\neq\varepsilon$, where $\pi$ is the
  %  unique maximal path of $t$ whose nodes form the set $\mathrm{maxNodes}(t)$. 
    %(Indeed, if $\mathrm{head}(\pi)= \varepsilon$ then $t$ is simple, and since by assumption $x$ 
    %does not belong to $\pi$, we have $\rank(t') < \rank(t)$, a contradiction.) 
   % Since $\mathrm{head}(\pi)\neq\varepsilon$, $\mathrm{head}(\pi|_x)$ is a
   % proper suffix of $\mathrm{head}(\pi)$, and $t' \prec t$. 
   \eop
  \end{enumerate}
  
  \end{proof}
  
  Now we define certain ordinary $\omega$-regular languages \cite{Muller,PerrinPin}
  corresponding to central paths of simple derivation trees.
  Let $\Gamma$ stand for the (finite) set consisting of those triplets
  \[(\alpha,B,\beta)\in (V\cup\Sigma)^*\times V\times(V\cup\Sigma)^*\]
  for which $\alpha B\beta$ occurs as the right-hand side of a production of $G$.
  For any nonterminal $A\in V$ and accepting set $F\in\mathcal{F}$, let $R_{A,F}\subseteq\Gamma^\omega$
  stand for the set of $\omega$-words over $\Gamma$ accepted by the deterministic (partial) Muller (word)
  automaton $(F,\Gamma,\delta,A,\{F\})$, with $B=\delta(C,(\alpha,D,\beta))$ if and only if
  $D=B$ and $C\to\alpha B\beta$ is a production of $G$.
  By definition, each $R_{A,F}$ is an $\omega$-regular set which can 
  be built from singleton sets corresponding to the elements of $\Gamma$ by the 
  usual regular operations and the $\omega$-power operation (actually,
  since every state has to be visited infinitely many times, $R_{A,F}$ can be written as the
  $\omega$-power of a regular language of finite words over $\Gamma$). 
  
  Members of $R_{A,F}$ correspond to central paths of $F$-simple $A$-trees
  in the following sense. Given $w=(\alpha_1,A_1,\beta_1)(\alpha_2,A_2,\beta_2)\ldots\in R_{A,F}$,
  we define an $F$-simple $A$-tree $t_w$ of $G$ as follows.
  The nodes $x_0,x_1,\ldots$ of the central path of $t_w$ are $x_0=\varepsilon$,
  and  $x_i=x_{i-1}\cdot(|\alpha_i|+1)$, for $i > 0$. Each $x_i$ has $|\alpha_{i+1}A_{i+1}\beta_{i+1}|$
  children, respectively labeled by the letters of the word $\alpha_{i+1}A_{i+1}\beta_{i+1}$.
  Nodes not on the central path of $t_w$ are leaf nodes.
  
  It is straightforward to see the following claims:
  \begin{enumerate}
  \item For each $w\in R_{A,F}$, $t_w$ is an $F$-simple $A$-tree.
  \item Every $F$-simple $A$-tree has a prefix of the form $t_w$, for some $w\in R_{A,F}$.
    Thus, every such tree can be constructed by choosing an appropriate $w\in R_{A,F}$, and substituting
    a derivation tree $t_x$ with root symbol $t_w(x)$ for each leaf $x$ of $t_w$.
    %This way, exactly the $F$-simple $A$-trees can be built.
  \end{enumerate}
  
  Moreover, it is clear that when  $w=(\alpha_1,A_1,\beta_1)(\alpha_2,A_2,\beta_2)\ldots$, then
  $\lfr(t_w)$ is $(\prod_{i \in \mathbb{N}} \alpha_i)\cdot (\prod_{i \in \mathbb{N}_-} \beta_i)$.

Let us assign a variable $X_A$ to each $A \in V$,
and let $\mathcal{X}$ be the set of all variables. 
For each ordinary regular expression $r$ over $\Gamma$,
we define an expression (term)  $\overline{r}$ over $\Sigma \cup \X$
involving the function symbols $\times,
+,\cdot$. To this end, when $\alpha$ is a 
word in $(\Sigma \cup V)^*$, let $\overline{\alpha}$ 
be the word in $(\X \cup \Sigma)^*$ obtained by replacing each occurrence
of a nonterminal $A$ 
by the variable $X_A$. Then, for a letter 
$\gamma = (\alpha,A,\beta)\in \Gamma$, define $\overline{\gamma}
= \overline{\alpha} \times \overline {\beta}$. To obtain
$\overline{r}$, we replace each occurrence of a letter $\gamma$ 
in $r$ by $\overline{\gamma}$.

When $A$ is a nonterminal and $A \in F$ for some $F \in \mathcal{F}$,
consider an ordinary regular expression $r_{A,F}$ over $\Gamma$
such that $r_{A,F}^\omega$ denotes the set $R_{A,F}$ (defined above) of all 
$\omega$-words corresponding to 
central paths of $F$-simple $A$-trees.
Then consider the following system of equations $E_G$ 
associated with $G$ in the variables $\mathcal{X}$:

\[X_A=
\mathop\sum\limits_{A\to u \in R}\overline{u} \ +\ 
\mathop\sum\limits_{A\in F\in \mathcal{F}}(\overline{r_{A,F}})^\omega.
\]

\begin{example}The system of equations $E_G$ associated with the grammar 
in Example~\ref{expl-grammar-scattered} is:
\begin{eqnarray*}
X_S &=& a + b + \varepsilon + X_I\\
X_I &=& (X_S\times X_S)^\omega
\end{eqnarray*}
\end{example}

As usual, we can associate a function $f_G: P(\Sigma^\sharp)^{\mathcal{X}} \to P(\Sigma^\sharp)^{\mathcal{X}}$ 
with $E_G$.  
By Lemmas~\ref{lem-cont1} and \ref{lem-cont2} and using the facts that the projections 
are monotone and that monotone functions are closed under function composition,
we have that $f_G$ is monotone. Thus, $f_G$ has a least fixed point. 
%By an argument similar to the proof of Proposition~\ref{prop-dertree}, which we include for completeness,
%we have:

\begin{proposition}
\label{prop-fixed point}
For each $A \in V$, the corresponding component of the
least fixed point solution of the system $E_G$ is 
the language $L(G,A)$ of all words derivable from $A$.  
\end{proposition}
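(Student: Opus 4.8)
The plan is to prove the two inclusions $\mu_A \subseteq L(G,A)$ and $L(G,A) \subseteq \mu_A$ separately, where I write $\mu = (\mu_A)_{A \in V}$ for the least fixed point of $f_G$. The first inclusion is a soundness argument settled by Tarski's theorem, and the second is a completeness argument by well-founded induction on the order $\prec$ of Lemma~\ref{lem-prec}.

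For the inclusion $\mu_A \subseteq L(G,A)$ I would show that the family $(L(G,A))_{A \in V}$ is a pre-fixed point of $f_G$, i.e.\ that interpreting each $X_B$ as $L(G,B)$ sends every summand of the equation for $X_A$ into $L(G,A)$; Tarski's theorem then gives $\mu \subseteq (L(G,A))_{A \in V}$ componentwise. For a summand $\overline{u}$ arising from a production $A\to u$, a word in its value is the concatenation, in the order fixed by $u$, of terminals and of words derivable from the nonterminals of $u$; placing the corresponding complete derivation trees below a fresh root labeled $A$ via $A\to u$ produces a complete $A$-tree with exactly this frontier. For a summand $(\overline{r_{A,F}})^\omega$ I would use the construction of $t_w$: given $w\in R_{A,F}$ and, for each off-path symbol, a word derivable from it, substitute the corresponding complete derivation trees at the leaves of $t_w$. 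The resulting tree is a complete $A$-tree whose single infinite path is the central path, along which $\infLabels=F\in\mathcal{F}$ so the Muller condition holds, while every other infinite path lies inside a substituted subtree and inherits acceptance; by the frontier identity $\lfr(t)=\prod_{i\in\mathbb{N}}(\alpha_i,\beta_i)$ its frontier is precisely the word computed by the pair-valued $\omega$-power.

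For the reverse inclusion $L(G,A)\subseteq\mu_A$ I would argue by well-founded induction on $\prec$. By the normal form every word of $L(G,A)$ is the frontier of a locally finite complete $A$-tree, so it suffices to prove $\lfr(t)\in\mu_{t(\varepsilon)}$ for every such tree $t$. If $t$ is a single terminal or $\varepsilon$ node the claim is immediate. If $t$ is not simple, then by the last part of Lemma~\ref{lem-prec} every proper subtree satisfies $t|_x\prec t$; decomposing at the root production $A\to u$ writes $\lfr(t)$ as the concatenation of the frontiers of the immediate subtrees, each of which lies in the appropriate component by the induction hypothesis, so $\lfr(t)$ lies in the value of $\overline{u}$ at $\mu$, hence in $\mu_A$ because $\mu$ is a fixed point. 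If $t$ is simple, its central path corresponds, by the correspondence between $R_{A,F}$ and $F$-simple $A$-trees established above, to some $w\in R_{A,F}$ with $A\in F\in\mathcal{F}$; the nodes off the central path are roots of subtrees $t|_x$ which are $\prec$-smaller by Lemma~\ref{lem-prec}, so the induction hypothesis applies to each of them. Reading the $\Gamma$-letters along the path and substituting these off-path frontiers shows that each pair $(\alpha_i,\beta_i)$ lies in the value of the matching $\overline{\gamma}$ and hence the sequence lies in the value of $\overline{r_{A,F}}$; the frontier identity $\lfr(t)=\prod_{i\in\mathbb{N}}(\alpha_i,\beta_i)$ then places $\lfr(t)$ in $(\overline{r_{A,F}})^\omega$ evaluated at $\mu$, and therefore in $\mu_A$.

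I expect the main obstacle to be the simple case of the second inclusion, where one must at once invoke the induction hypothesis on all off-path subtrees, recognise the $\Gamma$-word of the central path as an element of $R_{A,F}$ (matching the condition $\infLabels(\pi)=\mathrm{labels}(\pi)=F$ defining a simple tree with membership in $R_{A,F}$), and assemble the result through the pair-valued $\omega$-power so that the products $\prod_{i\in\mathbb{N}}\alpha_i$ and $\prod_{i\in\mathbb{N}_{-}}\beta_i$ align with the frontier identity; the symmetric $\omega$-summand in the first inclusion is the corresponding difficulty there. Throughout, the legitimacy of the induction rests only on $\prec$ being well-founded, which is part of Lemma~\ref{lem-prec}, while the monotonicity facts of Lemma~\ref{lem-cont1} and Lemma~\ref{lem-cont2} ensure that all these evaluations interact correctly with the fixed-point structure.
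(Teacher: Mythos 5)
Your proposal is correct and follows essentially the same route as the paper: the paper also dismisses the soundness direction as immediate from the definition of $E_G$ (you merely spell out the tree constructions) and proves the converse by well-founded induction on $\prec$ with the same three-way case split (single node, non-simple, $F$-simple via $t_w$ and Lemma~\ref{lem-prec}). The only cosmetic difference is that the paper shows $L(G,A)\subseteq L_A$ for an \emph{arbitrary} solution $(L_A)_{A\in V}$ rather than for the least fixed point specifically, which is an equivalent formulation of the same argument.
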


\begin{proof}
The fact that the languages $L(G,A),\ A \in V$, form a solution is clear from the 
definition of $E_G$. Let us also define $L(G,a) = \{a\}$, for each $a \in \Sigma \cup\{\varepsilon\}$.
Suppose that the family of languages $L_A,\ A \in V$ is another 
solution, and let $L_a = \{a\}$ for $a \in \Sigma \cup\{\varepsilon\}$.
We want to show that if $t$ is a locally finite complete $A$-tree with $\lfr(t) = u$,
 then $u \in L_A$, for each $A \in \Sigma \cup\{\varepsilon\} \cup V$.
We apply well-founded induction with respect to the wpo $\prec$.

For the base case, if $t$ consists of a single node, then $A = a\in \Sigma \cup\{\varepsilon\}$, $u = a$,  
and our claim is clear. Otherwise, there are two cases: either $t$ is a simple tree, or not.

If $t=A(t_1,\ldots,t_n)$ is not simple, then we have $t_i\prec t$ for each $i\in[n]$ 
by Lemma~\ref{lem-prec}. Let $A_i$ be the root symbol of $t_i$ and $u_i$ the 
labeled frontier word of $t_i$ for each $i$.
By the induction hypothesis, each $u_i$ is a member of $L_{A_i}$. Since $t$ is a derivation tree, 
$A\to A_1\ldots A_n$ is a production of $G$. Thus, by the construction of $E_G$, $u = u_1\ldots u_n \in L_A$.

Otherwise, if $t$ is an $F$-simple $A$-tree for some $F\in\mathcal{F}$ and $A \in V$, then $t$ 
can be constructed from a tree $t_w$ with $w\in R_{A,F}$ by
replacing each leaf node $x$ of $t_w$ by some complete derivation tree $t_x$ with root symbol
$t_w(x)$. Since such leaves are not on the central path of $t$, we have $t_x\prec t$
for each $x$, again by Lemma~\ref{lem-prec}. Applying the induction hypothesis, we get that
the labeled frontier word $u_x$ of each $t_x$ is a member of $L_{t_w(x)}$. Thus, by the 
construction of $E_G$, $u$ is a member of $L_A$.
\eop 
\end{proof}

%\begin{remark}
%One can derive Proposition~\ref{prop-fixed point} directly from Proposition~\ref{prop-dertree}
%by a well-known Mezei-Wright-type theorem.
%\end{remark} 

It is well-known, cf.  \cite{Bekic,DeBakkerScott} or \cite{BEbook}, Chapter 8, Theorem 2.15 and Chapter 6, Section 8.1, Equation (3.2),
that when $\L,\L',\L''$ are complete lattices and 
$f: \L \times \L' \times \L'' \to \L$ and $g: \L \times \L' \times \L'' \to \L'$
are monotone functions, then the least solution (in the parameter $z$) 
of the system of equations 
\begin{eqnarray*}
x &=& f(x,y,z)\\
y &=& g(x,y,z)
\end{eqnarray*}
can be obtained by Gaussian elimination as
\begin{eqnarray*}
x &=& \mu x. f(x, \mu y. g(x,y,z),z)\\
y &=& \mu y. g( \mu x. f(x, \mu y. g(x,y,z),z)  ,y,z)
\end{eqnarray*}
Using this fact and Proposition~\ref{prop-fixed point}, we obtain our final result.

Let the set of $\mu\omega T_s$-expressions over the alphabet $\Sigma$ be defined by the following grammar
(with $T$ being the initial nonterminal):
\begin{eqnarray*}
T &::=& a\ |\ \varepsilon\ |\ x\ |\ T+T\ |\ T\cdot T\ |\ \mu x . T\ |\ P^\omega\\
P &::=& T\times T\ |\ P+P\ |\ P\cdot P\ |\ P^*
\end{eqnarray*}
Here, $a\in \Sigma$ and $x\in\mathcal{X}$ for an infinite countable set of variables.
An occurrence of a variable is \emph{free} if it is not in the scope of a $\mu$-operation,
and bound, if it is not free. A \emph{closed expression} does not have free variable
occurrences.  
The semantics of these expressions are defined as expected using the monotone functions 
over $P(\Sigma^\sharp)$ and $P(\Sigma^\sharp \times \Sigma^\sharp)$ introduced earlier. 
When the free variables of an expression form the set $\mathcal{Y}$, then an expression 
denotes a language in $P((\Sigma \cup \mathcal{Y})^\sharp)$.

\begin{remark}
Actually, $\varepsilon$ is redundant, as it is expressible by $((\mu x.x \times \mu x.x)^*)^\omega$. 
We do not need a constant $0$ denoting the empty set of pairs since it is expressible by 
$(\mu x.x) \times (\mu x.x)$.
\end{remark}

\begin{theorem}
\label{thm-scattered}
A language $L\subseteq\Sigma^\sharp$ is an MCFL of scattered words if and only if it can be denoted by a closed $\mu\omega T_s$-expression.
\end{theorem}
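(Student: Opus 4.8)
The plan is to prove the two implications separately, using the fixed-point machinery developed above for the forward (``only if'') direction and a compositional grammar construction for the backward (``if'') direction.

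For the forward direction, let $G$ be the fixed normal-form MCFG with $L(G)=L$. First I would observe that every right-hand side of the system $E_G$ is already a $\mu\omega T_s$-expression in the free variables $\mathcal{X}$: each $\overline{u}$ is a finite product of the atoms $a,\varepsilon$ and variables $X_A$, hence a $T$-expression, while each summand $(\overline{r_{A,F}})^\omega$ has the form $P^\omega$, since $\overline{r_{A,F}}$ arises from the regular expression $r_{A,F}$ over $\Gamma$ by replacing each letter $\gamma=(\alpha,B,\beta)$ with the $P$-atom $\overline{\alpha}\times\overline{\beta}$ and applying only the $P$-operations $+,\cdot,{}^*$ (an empty sum being denoted by $\mu x.\,x$, which evaluates to the empty language). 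Thus $E_G$ is a finite system of fixed-point equations $\mathcal{X}=f_G(\mathcal{X})$ whose coordinates are $T$-expressions. I would then solve this system for the start coordinate $X_S$ by iterated Gaussian elimination (the Bekič identity recalled above), eliminating the variables one at a time. Because the class of $\mu\omega T_s$-expressions is closed under $+$, $\cdot$, the formation of least fixed points $\mu x.\,(-)$, and substitution of $T$-expressions for variables (which may occur inside $T\times T$ atoms of a $P$-subexpression), every elimination step stays inside the fragment, and the outcome is a closed $\mu\omega T_s$-expression. By Proposition~\ref{prop-fixed point} this expression denotes the $X_S$-component of the least solution of $E_G$, namely $L(G,S)=L$.

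For the backward direction I would argue by induction on the structure of the closed expression, proving simultaneously that every $T$-expression with free variables $\mathcal{Y}$ denotes an MCFL of scattered words over $\Sigma\cup\mathcal{Y}$, and that every $P$-expression denotes a language of \emph{pairs} realizable along the central path of a simple tree, i.e.\ as the collection of left/right frontier contributions of a derivation with a distinguished ``hole'' — this is exactly the correspondence $\gamma\mapsto\overline{\gamma}$ read backwards. The cases $a$, $\varepsilon$, $T+T$, $T\cdot T$ (and the $P$-cases $T\times T$, $P+P$, $P\cdot P$, $P^*$) follow from the standard closure of MCFGs under union, concatenation and the regular operations, by combining grammars. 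For $\mu x.\,T$ I would introduce a fresh recursive nonterminal for $x$, so that the generated language is the least fixed point of the corresponding equation; that this matches the denotational least fixed point is established by a Proposition~\ref{prop-fixed point}-style argument. For $P^\omega$ I would build a nonterminal $A$ together with a Muller accepting set forcing an infinite central path, along which the pair-atoms of $P$ are emitted; by the identity $\mathop\prod_{i\in\mathbb{N}}(u_i,v_i)=\bigl(\mathop\prod_{i\in\mathbb{N}}u_i\bigr)\bigl(\mathop\prod_{i\in\mathbb{N}_-}v_i\bigr)$ this generates exactly the $\omega$-power of the pair language of $P$, and the resulting $\zeta$-like order type is scattered. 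Free variables are handled by treating them as terminals, as in the remark that an expression with free variables $\mathcal{Y}$ denotes a language in $P((\Sigma\cup\mathcal{Y})^\sharp)$.

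The main obstacle is the backward direction, and within it the joint treatment of $P$-expressions and $P^\omega$: one must fix a robust notion of a context-free language of pairs that is preserved by $\times,+,\cdot,{}^*$, and then verify that the Muller accepting sets attached to the $\omega$-power generate \emph{precisely} the $\omega$-products — no more and no fewer words. Scatteredness itself is never in danger, since the single source of two-sided accumulation is the pair $\omega$-power, which by construction yields order types built from $\omega$ on the left and $-\omega$ on the right; the delicate point is rather the exact agreement between the least-fixed-point semantics of $\mu$ and $P^\omega$ and the languages generated by the recursive nonterminals and simple trees, which is where Proposition~\ref{prop-fixed point} and Lemma~\ref{lem-prec} must be invoked. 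By contrast, the forward direction is essentially mechanical once Proposition~\ref{prop-fixed point} is in hand.
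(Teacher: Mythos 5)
Your proposal follows essentially the same route as the paper's proof. The forward direction is exactly the paper's argument: the right-hand sides of $E_G$ are $\mu\omega T_s$-expressions over the variables $X_A$, Proposition~\ref{prop-fixed point} identifies the least solution of $E_G$ with the languages $L(G,A)$, and Beki\'c-style Gaussian elimination (which, as you note, stays inside the fragment) produces a closed expression denoting $L(G,S)=L$. Your backward direction also matches the paper in structure: induction on the expression, grammar constructions per operation, a Muller condition forcing an infinite central path for $P^\omega$, and a fresh recursive nonterminal for $\mu x.t$.

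The one genuine hole is the point you yourself flag as the ``main obstacle'': you never fix what a context-free language of \emph{pairs} is, so the inductive invariant for $P$-expressions (``left/right frontier contributions of a derivation with a distinguished hole'') is not yet an object on which grammar constructions can operate. The paper resolves this with a simple device you should adopt: encode the pair $(u,v)$ as the word $u\#v$ over $\Delta\cup\{\#\}$, where $\#$ is a fresh terminal, so that a $P$-expression denotes an ordinary MCFL contained in $\Delta^\sharp\#\Delta^\sharp$. Under this encoding, closure under $\times$, pair sum and pair concatenation reduce to closure of MCFLs under substitution (for instance, pair concatenation of $L$ and $L'$ is exactly the substitution $\#\mapsto L'$ performed inside $L$, since $u\#u'$ becomes $uv\#v'u'$), while pair star, pair $\omega$-power and $\mu$ are handled by three explicit grammars: for star, promote $\#$ to a nonterminal with rules $\#\to\#'$ and $\#\to S$ and keep the accepting family unchanged, so the hole recurs only finitely often along any path; for $\omega$-power, keep only the rule $\#\to S$ and enlarge the accepting family by all sets $H\cup\{\#\}$ with $H\subseteq V$, so the hole recurs infinitely often along the central path and the generated words are exactly the products $\bigl(\prod_{i\in\mathbb{N}}u_i\bigr)\bigl(\prod_{i\in\mathbb{N}_{-}}v_i\bigr)$; for $\mu x.t$, add the nonterminal $x$ with rule $x\to S$ and unchanged accepting family, which yields the least fixed point. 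With this encoding, the ``exact agreement'' between the denotational semantics and the generated languages that you were worried about becomes a routine check, and scatteredness is preserved throughout.
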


\begin{proof}
It is easy to show that each expression denotes an MCFL of scattered words.
One uses the following facts, where $\Delta$ denotes an alphabet and $x,\#\not\in \Delta$.
\begin{enumerate}
\item The set of MCFLs (of scattered words) over $\Delta$ is closed under $+$ and $\cdot$.
\item If $L,L'\subseteq \Delta^\sharp$ are MCFLs (of scattered words), then $L\# L' \subseteq (\Delta \cup\{\#\})^\sharp$ is an MCFL (of scattered words).
\item Suppose that $L,L'\subseteq \Delta^\sharp\#\Delta^\sharp$ are MCFLs (of scattered words).
Then $$\{uv\#v'u' : u\#u'\in L,\ v\#v'\in L'\} \subseteq \Delta^\sharp\#\Delta^\sharp$$ 
is an MCFL (of scattered words). 
\item Suppose that $L\subseteq \Delta^\sharp\#\Delta^\sharp$ is an MCFL (of scattered words). 
Then $$\{u_1\ldots u_n\#v_n\ldots v_1 : n\geq 0,\ u_i\#v_i \in L\} \subseteq \Delta^\sharp\#\Delta^\sharp$$ is an MCFL (of scattered words). 
\item Suppose that $L\subseteq \Delta^\sharp\#\Delta^\sharp$ is an MCFL (of scattered words). 
Then $$\{(u_1u_2\ldots)(\ldots v_2v_1) :  u_i\#v_i \in L\} \subseteq \Delta^\sharp$$ is an MCFL (of scattered words).
\item Suppose that $L \subseteq (\Delta \cup\{x\})^\sharp$ is an MCFL (of scattered words). Then, with respect to set inclusion, there is a
least language $L' \subseteq\Delta^\sharp$ such that $L[x \mapsto L'] = L'$,
and this language $L'$ is an MCFL (of scattered words). (Here, $L[x \mapsto L']$ is the language obtained from $L$ 
by `substituting' $L'$ for $x$.)
\end{enumerate}
It is known (see \cite{EsikIvanMuller}) that the class of MCFLs is (effectively) closed under substitution and that
every context-free language of finite words (in particular, $\{a,b\}$, $\{ab\}$ or $\{a\#b\}$) is an MCFL,
showing Items 1--3 above.

For Items 4 and 5, let $G=(V,\Delta\cup\{\#\},R,S,\mathcal{F})$ be an MCFG generating the MCFL $L\subseteq\Delta^\sharp\#\Delta^\sharp$.
Then
\[ G_1=(V\cup\{\#\},\Delta\cup\{\#'\},R\cup\{\#\to\#',\#\to S\}, \#,\mathcal{F})\]
generates the MCFL $L_1=\{u_1\ldots u_n\#'v_n\ldots v_1:n\geq 0, u_i\#v_i\in L\}$, showing Item 4 (applying the substitution $\#'\mapsto \{\#\}$) and
\[ G_2=(V\cup\{\#\},\Delta,R\cup\{\#\to S\}, \#,\mathcal{F}\cup\{H\cup\{\#\}:H\subseteq V\})\]
generates the MCFL defined in Item 5.

Finally, let $G=(V,\Delta\cup\{x\},R,S,\mathcal{F})$ be an MCFG generating $L\subseteq(\Delta\cup\{x\})^\sharp$. Then
\[ G_3=(V\cup\{x\},\Delta, R\cup\{x\to S\}, x,\mathcal{F})\]
generates the language $L'$ of Item 6.

The other direction follows from Proposition~\ref{prop-fixed point}.
\eop
\end{proof}

\begin{example}
The expression $\mu x. ((x \times x)^\omega + a + b + \varepsilon)$ denotes the set of all scattered words 
over the alphabet $\{a,b\}$.  
\end{example}

\begin{example}
Let $L \subseteq\{a,b\}^\sharp$ be the language of all words $w$ such that the 
word obtained from $w$ by removing all occurrences of letter $b$ is well-ordered,
as is the `mirror image' of the word obtained by removing all occurrences of letter $a$. 
It is not difficult to show that each word in $L$ 
contains only a finite number of `alternations' between $a$ and $b$. 
Using this fact, an MCFG generating $L$ is:
 $G=(\{S,A,B,I,J\},\Sigma,R,S,\{\{I\},\{J\}\})$ with $R$ consisting of the productions
\begin{eqnarray*}
S &\to& AS\ |\ BS\ |\ \varepsilon\\
A &\to& a\ |\ \varepsilon\ |\ I\\
I &\to& AI\\
B &\to& b\ |\ \varepsilon\ |\ J\\
J &\to& JB
\end{eqnarray*}
\end{example}
 Using the algorithm described above (with some simplification), an expression for $L$ is:
\begin{eqnarray*}
t_S &=& \mu x_S.\bigl((t_A+t_B)x_S+\varepsilon\bigr)
\end{eqnarray*}
with 
\begin{eqnarray*}
t_A &=& \mu x_A.\bigl( a+\varepsilon+ (x_A \times \varepsilon)^\omega \bigr)\\
t_B &=& \mu x_B.\bigl( b+\varepsilon+ (\varepsilon \times x_B)^\omega \bigr).
\end{eqnarray*} 

We restate Theorem~\ref{thm-well ordered} and show that it is a corollary of Theorem~\ref{thm-scattered}.

{\bf Theorem.}
A language $L \subseteq \Sigma^\sharp$ is an MCFL of well-ordered words iff 
it is denoted by some closed $\mu\omega T_w$-expression.

\begin{proof}
Recall that the set of $\mu\omega T_w$-expressions over an alphabet $\Sigma$ is defined by the grammar
\begin{eqnarray*}
T &::=& a\ |\ \varepsilon\ |\ x\ |\ T+T\ |\ T\cdot T\ |\ \mu x . T\ |\ T^\omega
\end{eqnarray*}
where $a\in\Sigma$ and $x$ ranges over the set $\mathcal{X}$ of variables, moreover, an expression $t$ is closed
if each occurrence of a variable $x$ in $t$ is within the scope of some prefix $\mu x.$
Below we will sometimes view the construct $t^\omega$ as a shorthand for $(t\times\varepsilon)^\omega$.

For one direction, we show by structural induction that for a $\mu\omega T_w$-expression $t$ 
with free variables in $X$, the language $|t| \subseteq (\Sigma\cup X)^\sharp$ denoted by $t$ consists of well-ordered 
words. For the base cases, i.e. when $t=a$, $t=\varepsilon$ or $t=x$, the claim clearly holds.
If $t = t_1 + t_2$ or $t = t_1 \cdot t_2$, or $t = t_1^\omega$, for some expressions $t_1,t_2$, 
our claim is again clear (using the fact that every well-ordered product of well-ordered words is
well-ordered in the last two cases). 
Finally, if $t=\mu x.t_1$, where $t_1$ denotes an MCFL $L\subseteq (\Sigma\cup X)^\sharp$, 
$|t|$ is the language 
$\mathop\bigcup\limits_{\alpha\geq 0}L_\alpha$, where $L_0=\emptyset$ and for each $\alpha>0$, 
$$L_{\alpha}= L_{<\alpha} \cup L[x\mapsto L_{<\alpha}]$$
where $L_{<\alpha}=\left(\mathop\bigcup\limits_{\beta<\alpha}L_\beta\right)$.
Thus, if $L$ contains only well-ordered words then so does each $L_\alpha$, since 
languages of well-ordered words are closed under substitution.

For the other direction, we may restrict ourselves to expressions (of type $T$ or $P$) 
which do not have any subexpression denoting the empty set, nor any subexpression 
other than $\epsilon$ denoting $\{\varepsilon\}$. 

Suppose that $t$ and $p$ are such expressions of type $T$ and $P$, respectively.
It is not difficult to prove the following claim by (simultaneous) structural induction: 

{\em Claim A.} If $t$ has a subexpression (belonging to the syntactic category $P$)
of the form $t_1\times t_2$ with $t_2\neq\varepsilon$, then $|t|$ contains a word 
which is not well-ordered.
If $p$ has a subexpression $p'=t_1\times t_2$ with $t_2\neq \varepsilon$, 
then $|p|$ contains a pair $(u,v)$ such that either $v \neq \varepsilon$
or one of $u,v$ is not well-ordered. 

To prove this, first note that $t$ cannot have the form $a$, $\varepsilon$ or $x$.
When $p=t_1\times t_2$, for some $t_1,t_2$, $t_2 \neq \varepsilon$,
then our claim clearly holds for $p$, since either one of $|t_1|$ 
and $|t_2|$ contains a word which is not well-ordered, or $|t_2|$ 
contains a nonempty word.
The induction step is clear when $p = p_1 + p_2$, $p = p_1 \cdot p_2$, $p = p_1^*$, 
or when $t = t_1 + t_2$, $t= t_1 \cdot t_2$, or $t = p_1^\omega$. 
When $t = \mu x.t_1$, then $t_1$ contains a word $u$ which is not well-ordered.
Since by assumption $|t|$ contains a nonempty word $v$, $t$ contains 
$u[x \mapsto v]$, which is not well-ordered.

To complete the proof, note that if each subexpression of $t$ of the form $t_1\times t_2$ 
satisfies $t_2=\varepsilon$, then we can transform $t$ into an equivalent $\mu \omega T_w$ 
expression by repeatedly replacing subexpressions of the form $t_1 \times \varepsilon$
with $t_1$ and subexpressions of the form $t_1^*$ with $\mu x. (t_1x + \varepsilon).$ 
\eop
\end{proof}

\def\isEmpty#1{{\textsc{isEmpty}(#1)}}
\def\Symbols#1{{\textsc{Symbols}(#1)}}
\def\True{{\textsc{True}}}
\def\False{{\textsc{False}}}
\def\Ordo{{\mathcal{O}}}

Using Claim A, we may develop a low-degree polynomial-time algorithm for the following
decision problem: given a closed $\mu\omega T_s$-expression $t$ of syntactic category $T$, does 
the language denoted by $t$ consist of well-ordered words only?
The expression $t$ may be assumed to be given as an expression tree.

In the following, $t_1,t_2$ denote expressions belonging to the syntactic category $T$ and 
$p_1,p_2$ denote expressions of syntactic category $P$. Expressions $e,e_1,e_2$ are arbitrary.
We also allow the symbol $\emptyset$ to appear in expressions, which denotes the empty language.
 
In the first step of the algorithm, we transform $t$ into an equivalent expression 
$t_\emptyset$ which is either the symbol $\emptyset$, or contains no subexpression 
denoting the empty set. This can be done by a straightforward algorithm 
in linear time using the fact that an expression of the form $\mu x. t_1$ denotes the 
empty language iff $t_1[x/\emptyset]$, the expression obtained from $t_1$ by replacing 
each free occurrence of $x$ in $t_1$ by $\emptyset$ denotes the empty language.

Suppose now that $t_\emptyset$ is not the symbol $\emptyset$, so that $t_\emptyset$ is 
not empty. We construct another equivalent expression in which each subexpression of 
syntactic category $T$ denoting $\{\varepsilon\}$
is $\varepsilon$ itself. To achieve this, we determine for each subexpression $e$ of $t_\emptyset$ 
the set $\Symbols{e}\subseteq \Sigma\cup\mathcal{X}$
containing all the symbols that occur in some word of $|e|$ 
(or in a word in a pair of $|e|$, if $e$ is of type $P$).
The recursion rules for this are: 
\begin{align*}
\Symbols{\varepsilon} = \emptyset,\ \Symbols{x} =&\ \{x\} ,\ \Symbols{a} =\{a\},\\
\Symbols{e_1e_2} = \Symbols{e_1+e_2} =&\ \Symbols{e_1}\cup\Symbols{e_2},\\
\Symbols{p_1^*}=\Symbols{p_1^\omega} =&\ \Symbols{p_1},\\
\Symbols{t_1\times t_2} =&\ \Symbols{t_1}\cup\Symbols{t_2},\\
\Symbols{\mu x.t_1} =&\ \Symbols{t_1}-\{x\}.
\end{align*}
Note that the correctness of these rules (e.g. the one for concatenation) 
depends on the assumption that no subexpression of $t_\emptyset$  denotes the empty set.

Having computed $\Symbols{e}$ for each subexpression $e$, observe that 
$|e| =\{\varepsilon\}$ for a subexpression $e$ of syntactic category $T$ 
if and only if $\Symbols{e}=\emptyset$.
Hence, during the computation of $\Symbols{.}$, we can flag each subexpression of $t_\emptyset$ of type
$T$ by a bit indicating whether it denotes the language $\{\varepsilon\}$.
Using this information, we can then replace each maximal subexpression denoting 
$\{\varepsilon\}$ by $\varepsilon$, yielding an equivalent expression
$t_{\emptyset\varepsilon}$ containing no occurrence of the symbol $\emptyset$ 
such that each subexpression of type $T$ different from $\varepsilon$
denotes a language containing at least one nonempty word.
 Applying now Claim A to $t_{\emptyset\varepsilon}$, we get the desired decision procedure
 answering the question whether the given closed expression $t$ denotes a language of 
 well-ordered words. 

All steps can be performed in (deterministic) linear time 
in the usual RAM model of computation, say, 
except for the computation of the function
$\Symbols{.}$ whose time complexity depends on the data structure chosen for representing 
sets of symbols. If this data structure is a self-balancing binary tree, which supports 
the construction of $\emptyset$ and the singleton sets in constant time,
the removal of one element from an $n$-element set in $\Ordo(\log n)$ time
and the construction of the union of two sets with $n$ and $k$ elements 
in $\Ordo(\min\{n,k\}\cdot\log(n+k))$ time
(destroying the two sets, which is not a problem since only their emptiness flag is needed later, 
which is already stored), respectively,
then we get an overall time complexity of $\Ordo(n\cdot \log^2 n)$. Thus we have shown the following:
\begin{corollary}
The problem whether an arbitrary closed $\mu\omega T_s$-expression of syntactic category $T$
denotes a language which 
consists of well-ordered words only,
can be decided in $\Ordo(n\cdot \log^2 n)$ time (in the usual RAM model of computation).
\end{corollary}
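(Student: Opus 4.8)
The plan is to separate correctness from the time bound, regarding the preprocessing pipeline sketched above as the algorithm to be analyzed. For correctness I would argue in three stages matching the three phases. First, the elimination of empty subexpressions is justified by the fact that $\mu x.t_1$ denotes $\emptyset$ iff $t_1[x/\emptyset]$ does, together with the obvious emptiness rules for $+,\cdot,\times,{}^*$ and ${}^\omega$; a single bottom-up pass either reports $\emptyset$ (whence $|t|=\emptyset$ is trivially a language of well-ordered words) or returns $t_\emptyset$ having no empty subexpression. Second, under the invariant that no subexpression denotes $\emptyset$, I would check that the recursion for $\Symbols{\cdot}$ is correct and that a type-$T$ subexpression $e$ satisfies $|e|=\{\varepsilon\}$ exactly when $\Symbols{e}=\emptyset$; the no-empty invariant is exactly what makes rules such as $\Symbols{e_1e_2}=\Symbols{e_1}\cup\Symbols{e_2}$ valid. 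Third, after replacing every maximal $\{\varepsilon\}$-subexpression by $\varepsilon$ to obtain $t_{\emptyset\varepsilon}$, I would invoke Claim~A: $t_{\emptyset\varepsilon}$ denotes a language of well-ordered words iff it contains no subexpression $t_1\times t_2$ with $t_2\neq\varepsilon$, which one checks by a single linear scan.

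For the running time I would note that every phase except the evaluation of $\Symbols{\cdot}$ is plainly linear, so the whole cost collapses to the total work of the set operations used in computing $\Symbols{\cdot}$ bottom-up over the $n$ nodes of the expression tree. With the self-balancing tree representation, the empty and singleton constructions are $\Ordo(1)$, the at most $n$ deletions (one per $\mu$-node) cost $\Ordo(\log n)$ each and contribute only $\Ordo(n\log n)$, and the estimate is governed by the union nodes, where a single union of an $a$-element and a $b$-element set costs $\Ordo(\min\{a,b\}\cdot\log(a+b))$. Since every symbol set is a subset of the $\le n$ distinct symbols occurring in $t$, the logarithmic factor is $\Ordo(\log n)$, and it remains to bound $\sum\min\{a,b\}$ over all union nodes by $\Ordo(n\log n)$.

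The crux, and the step I expect to cost the most care, is precisely this sum; the plan is a small-to-large (weighted-union) charging argument. First I would reduce to a multiset computation on the same tree in which each union is replaced by a disjoint sum (and each $\mu$-deletion kept or dropped): a straightforward induction gives that the actual union-size of every node is at most its multiset-size, and since $\min$ is monotone, $\sum\min\{a,b\}$ for the real computation is dominated term by term by the corresponding sum in the multiset computation. In the multiset computation distinct leaves inject disjoint tokens, at most one per leaf, so at most $n$ tokens are ever created; charging the quantity $\min\{a,b\}$ at a union node to the tokens on its smaller side, a token is charged once for each ancestor on whose smaller side it lies, and each such event at least doubles the size of the block containing that token. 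Hence each token is charged $\Ordo(\log n)$ times, so $\sum\min\{a,b\}=\Ordo(n\log n)$, and the union nodes contribute $\Ordo(\log n)\cdot\Ordo(n\log n)=\Ordo(n\log^2 n)$ in total, which dominates all other phases and yields the stated bound.
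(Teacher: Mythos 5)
Your proposal is correct and follows essentially the same route as the paper: the identical three-phase algorithm (linear-time elimination of $\emptyset$-denoting subexpressions, the bottom-up symbol-set computation used to replace every maximal $\{\varepsilon\}$-denoting subexpression by $\varepsilon$, and then the purely syntactic check of the second components of $\times$ licensed by Claim~A), with the same self-balancing-tree representation of symbol sets. The only difference is that you make explicit the small-to-large charging argument --- including the multiset relaxation needed because the unions here are not disjoint --- where the paper simply asserts that these data-structure costs sum to $\mathcal{O}(n\log^2 n)$; this fills in a detail rather than changing the approach.
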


\thebibliography{99}

\bibitem{DeBakkerScott}
J.W. de Bakker and D. Scott,
 A theory of programs.  {\em IBM Seminar Vienna}, August 1969.

\bibitem{Bedon96}
N. Bedon. Finite automata and ordinals.  
\emph{Theoretical Computer Science},  156(1996), 119--144. 

\bibitem{Bedonetal}
N. Bedon, A. B\`es, O. Carton and C. Rispal.
Logic and rational languages of words indexed by linear orderings.
In: proc. \emph{CSR 2008}, LNCS 5010, Springer, 2008, 76--85. 

\bibitem{Bekic}
H. Beki\'c.
Definable operations in general algebras, and the theory of automata and flowcharts.
{\em IBM Seminar Vienna}, December 1969.

\bibitem{BesCarton}
A. B\`es and O. Carton. 
A Kleene theorem for languages of words indexed by linear orderings. 
In: proc. \emph{DLT'2005}, LNCS 3572, Springer, 2005,  158--167. 

%\bibitem{BloomChoffrut}
%S. L. Bloom and Ch. Choffrut.
%Long words: the theory of concatenation and $\omega$-power.  
%\emph{Theoretical Computer Science},  259(2001),  533--548.

\bibitem{BEbook}
S. L. Bloom and Z. \'Esik. \emph{Iteration Theories}.
EATCS Monograph Series in Theoretical Computer Science,
Springer, 1993.

%\bibitem{BloomEsik04}
%S. L. Bloom and Z. \'Esik. Axiomating omega and omega-op powers of words.  
%\emph{Theoretical Informatics and Applications},  38(2004),  3--17.

%\bibitem{BloomEsik05}
%S. L. Bloom and Z. \'Esik. The equational theory of regular words.  
%\emph{Information and Computation},  197(2005),  55--89. 

%\bibitem{BloomEsik07}
%S. L. Bloom and Z. \'Esik. Regular and algebraic words and ordinals. 
%In: proc. \emph{CALCO 2007}, LNCS 4624, Springer, 2007, 1--15. 

%\bibitem{BEalgord} 
%S. L. Bloom and Z. \'Esik. 
%Algebraic ordinals. {\em Fundamenta Informaticae}, 99(2010), 
%383--407.

%\bibitem{BEalgscattered} S. L. Bloom and Z. \'Esik.
% Algebraic linear orderings. {\em Int. J. Foundations of Computer Science}, 
%22(2011), 491--515.

\bibitem{Boasson}
L. Boasson. Context-free sets of infinite words.  
\emph{Theoretical Computer Science (Fourth GI Conf., Aachen, 1979)},
LNCS 67, Springer, 1979, 1--9.
 
\bibitem{Bruyereetal}
V. Bruy\`ere and O. Carton. 
Automata on linear orderings. 
\emph{J. Computer and System Sciences}, 73(2007), 1--24.

\bibitem{Buchi73}
J. R. B\"uchi. 
The monadic second order theory of $\omega_1$. In:  
\emph{Decidable theories, II},  Lecture Notes in Math., Vol. 328, Springer, 1973,
1--127.

%\bibitem{CarayolEsik1}
%A. Carayol and Z. \'Esik.
%The FC-rank of a context-free language. \emph{Inf. Proc. Letters}, to appear. 

%\bibitem{CarayolEsik2}
%A. Carayol and Z. \'Esik.
%A context-free linear ordering with an undecidable first-order theory. 
%In: proc. \emph{IFIP TCS 2012}, LNCS 7604, Springer, 2012, 104--118.

\bibitem{Choueka}
Y. Choueka. Finite automata, definable sets, and regular expressions over $\omega^n$-tapes.  
\emph{J. Computer and System Sciences},  17(1978), no. 1, 81--97.

\bibitem{CohenGold}
  R.~S.~Cohen and A.~Y.~Gold.
  Theory of $\omega$-languages, parts one and two.
  \emph{J. Computer and System Sciences}, 15(1977), 169--208.

\bibitem{Courcelle}
B. Courcelle. Frontiers of infinite trees. 
\emph{Theoretical Informatics and Applications}, 12(1978),
319--337. 

%\bibitem{Eilenberg} 
%S. Eilenberg. 
%\emph{Automata, Languages, and Machines}. Vol. A.
%Academic Press, 1974.

%\bibitem{Esikdense}
%Z. \'Esik. An undecidable property of context-free linear orders, 
%{\em Information Processing Letters}, 
% 111(2011), 107--109.
 
%\bibitem{Esikdlt2011} 
%Z. \'Esik. Scattered context-free linear orderings. In:  proc.
%\emph{Developments in Language Theor (Milan, 2011)}, LNCS 6795, Springer-Verlag, 2011,
%216--227.
 
% \bibitem{EsikIvancford}
% Z. \'Esik and S. Iv\'an. Hausdorff rank of scattered context-free linear orders,
% In: proc. {LATIN 2012}, LNCS 7256, Springer, 2012, 291--302.

%\bibitem{EKIlinear}
%Z. \'Esik, M. Ito and  W. Kuich.
%Linear languages of finite and infinite words. In: proc. {\em Automata, Formal Languages,
%and Algebraic Systems}, World Scientific, 2010, 33--46.

\bibitem{EsikIvanBuchi}
Z. \'Esik and S. Iv\'an. 
B\"uchi context-free languages. {\em Theoretical Computer Science}, 412(2011), 805--821.
%Extended abstract in Proc. \emph{ICTAC 09, (Kuala Lumpur)}, LNCS 5684, Springer, 2009, 185--199.

\bibitem{EsikIvanMuller}
Z. \'Esik and S. Iv\'an. 
On Muller context-free grammars. \emph{Theoretical Computer Science}, 416(2012), 17--32. 
%Extended abstract in: proc. {\em Developments in Language Theory, 
%(London, ON, 2010)}, LNCS, Volume 6224, Springer, 2010, 173--184.

%\bibitem{EsikKuich}
%Z. \'Esik amd W. Kuich.
%On iteration semiring-semimodule pairs. 
%      {\em Semigroup Forum}, 75(2007), 129--159.

\bibitem{EsikIvanLatin}  
Z. \'Esik and S. Iv\'an.   
Hausdorff rank of scattered context-free linear orders,     
In: {\em LATIN 2012},  LNCS 7256, Springer, 2012, 291--302.     
      
\bibitem{EsikOkawa}
Z. \'Esik and S. Okawa.
On context-free languages of scattered words.
\emph{Developments in Language Theory 2012}, 
LNCS 7410, Springer, 2012, 142--153.
      
%\bibitem{GecsegSteinby}
%F. G\'ecseg and M. Steinby. \emph{Tree Automata}. Akad\'emiai Kiad\'o,
%Budapest, 1984.

%\bibitem{Golan} 
%J. S. Golan.
%{\em The Theory of Semirings  with Applications in Computer Science}.
%Longman Scientific and Technical, 1993. 

%\bibitem{Gruska}
%J. Gruska.
%A characterization of context-free languages, 
%\emph{J. Computer and System Sciences}, 5(1971), 353--364.

%\bibitem{Heilbrunner}
%S. Heilbrunner. 
%An algorithm for the solution of fixed-point equations for infinite
%words. \emph{Theoretical Informatics and Applications}, 14(1980), 131--141.

%\bibitem{IvanMeszaros}
%S. Iv\'an and \'A. M\'esz\'aros. On Muller context-free grammars generating well-ordered words.
%In: proc. \emph{Automata and Formal Languages, AFL 2011 (Debrecen, 2011)}.

\bibitem{Khoussainovetal}
B. Khoussainov, S. Rubin and F. Stephan.
Automatic linear orders and trees, 
\emph{ACM Transactions on Computational Logic} (TOCL),
6(2005), 675--700.

%\bibitem{Lohreyetal}
%M. Lohrey and Ch. Mathiessen.
%Isomorphism of regular words and trees. 
%In: proc. \emph{ICALP 2011 (Zurich, Switzerland)}, 
% LNCS 6756, Springer, 2011,  210--221. 
 
\bibitem{Muller}
R. Muller, Infinite sequences and finite machines.
In: {\em 4th Annual Symposium on Switching 
Circuit Theory and Logical Design},
IEEE Computer Society, 1963, 3--16.

\bibitem{Nivat}
M. Nivat. Sur les ensembles de mots infinis engendr\'es par une grammaire alg\'ebrique. (French)  
\emph{Theoretical Informatics and Applications},  12(1978), 259--278.

\bibitem{PerrinPin}
D. Perrin and J.-E. Pin.
\emph{Infinite Words}.  Elsevier, 2004.

\bibitem{Rosenstein}
 J. G. Rosenstein. \emph{Linear Orderings}. Academic Press, 1982. 
 
%\bibitem{Thomas}
%W. Thomas. On frontiers of regular sets. 
%\emph{Theoretical Informatics and Applications}, 20(1986), 
%371--381. 

%\bibitem{Thomassurvey}
%W. Thomas.
% Automata on infinite objects. In J. van Leuwen, editor, \emph{Handbook of Theoretical Computer Science}, volume B: Formal Models and Semantics,  Elsevier Science Publishers, Amsterdam, 1990, 133--192.
% 
 
%\bibitem{Wand}
%M. Wand. An algebraic formulation of the Chomsky hierarchy.
%In \emph{Category Theory Applied to Computation and Control},
%LNCS 25, Springer, 1974, 209--213.

\bibitem{Wojciechowski84}
J. Wojciechowski. Classes of transfinite sequences accepted by finite automata, 
\emph{Fundamenta Informaticae}, 7(1984), 191--223.

\bibitem{Wojciechowski85}
J. Wojciechowski. Finite automata on transfinite sequences and regular expressions, 
\emph{Fundamenta Informaticae}, 8(1985), 379--396.

\end{document}